\documentclass[conference]{IEEEtran}
\usepackage{spconf,amsmath,graphicx,hyperref,amssymb, amsthm}
\usepackage{xcolor}
\usepackage{subcaption}
\newcommand{\vc}[1]{{\mathbf #1}}
\newcommand{\vct}[1]{\pmb{#1}}
\newcommand{\ma}[1]{{\mathbf #1}}

\newtheorem{prop}{Proposition}

\newtheorem{definition}{Definition}
\usepackage[authormarkup=none]{changes}

\title{Graph Signal Surrogate Generation for Statistical Testing of Covariance Structure on Directed Graphs}

\name{Chun Hei Michael~Chan$^{1,2}$ \qquad Alexandre~Cionca$^{1,2}$ \qquad Dimitri~Van~De~Ville$^{1,2}$}
  
  \address{$^{1}$ Neuro-X Institute, Ecole Polytechnique Fédérale de Lausanne (EPFL)\\
      $^{2}$ Department of Radiology and Medical Informatics, University of Geneva}    

\begin{document}
%
\maketitle

\begingroup
\renewcommand\thefootnote{}\footnotetext{An extended version of this paper has been published as \cite{11626552}.}
\endgroup

\begin{abstract}
Non-parametric statistical testing is based on surrogate data generation that randomizes chosen features in the empirical data. In the graph setting, graph signal processing (GSP) brings forward versatile schemes; e.g., to preserve smoothness of graph signals as measured by the Dirichlet energy. However, how to deal with directed graphs remains an active area of research.
We begin by revisiting the definition of directed graph wide-sense stationarity. The surrogate signals preserve covariance under the stationary assumption. We demonstrate the feasibility of the scheme to detect irregular node covariance and benchmark our method against conventional schemes using the symmetrized graph. We also investigate how the level of asymmetry affects the detection performance, thus assessing the advantages of the presented approach. Finally, we show results for a real-world graph extracted from the Freeman EIES social network dataset.
\end{abstract}
\begin{keywords}
Graph signal processing, graph Fourier transform, surrogate data generation, non-parametric statistical testing
\end{keywords}
\section{Introduction}
\label{sec:intro}
Graph Signal Processing (GSP) provides a powerful framework for analyzing signals on irregular domains, with broad applications in learning, neuroscience, and epidemiology~\cite{shuman_emerging_2013, marques_signal_2020}. Most existing GSP frameworks focus on undirected graphs. In particular, the notion of stationarity of graph signals has been extensively studied in the undirected setting~\cite{perraudin_stationary_2017, girault_translation_2015, marques_stationary_2017} and recently extended to directed graphs~\cite{iraji_wss_2025}. 

In conventional signal processing, a random process in the time domain is (weak) stationary when it is characterized by a constant first moment and time-invariant second moment. Equivalently, its covariance matrix is diagonalized by the Fourier basis 
\cite{perraudin_stationary_2017}. 
For undirected graphs, a similar definition is used, establishing graph wide sense stationary (GWSS) signals~\cite{perraudin_stationary_2017}. Specifically, a GWSS signal is stationary if and only if its covariance matrix is diagonalized by the eigenvectors of the graph shift operator (GSO). The extension of GWSS for directed graphs has been presented in recent work \cite{iraji_wss_2025}; i.e., the covariance matrix of a directed GWSS signal results from the left and right multiplication of a block diagonal matrix by respectively the generalized eigenvectors and its hermitian. This definition, however, considers the Jordan Normal Form (JNF) of the GSO, which raises two immediate concerns. First, the JNF is known to be computationally unstable and inefficient. Second, the JNF prevents phase operations \cite{chan_hilbert_2025}, which is central to the framework for surrogates presented here. Most graphs are however strongly connected leading to a diagonalizable GSO \cite{sevi2023harmonic}, and in the case where not, we will rely upon a recently proposed method \cite{seifert_digraph_2021} that iteratively dismantles Jordan blocks through rank-1 matrix perturbation to obtain diagonalizable GSO, therefore enabling the use of proper eigenvectors and eigenvalues.

The concept of stationarity on graphs has inspired several seminal contributions in GSP~\cite{iraji_wss_2025, cheung_graph_2018}. In particular, we proposed non-parametric statistical hypothesis testing for undirected graphs~\cite{pirondini_spectral_2016}, by exploiting sign randomization of the graph Fourier transform (GFT) coefficients to generate surrogate signals. A related surrogate generation scheme has also been introduced for graphs with complex Hermitian Laplacian representation~\cite{belda_new_2019}. This method is however not derived from stationary principles and is used in \cite{belda_new_2019} for data augmentation rather than statistical hypothesis testing. In both frameworks, the signal is randomized in the spectral domain, while preserving the variance of the initial graph Fourier coefficients. Finally, we emphasize that in contrast to the extensive literature on directionality inference in graph discovery~\cite{cui_topology_2024, shaska_causal_2025}, the present work is concerned with the statistical testing of signals observed on a fixed graph. A key distinction is that graphical causal models typically rely on acyclic graph assumptions, and the proposed framework applies to diagonalizable GSO, requiring cyclic graphs \cite{chan_hilbert_2025}.

Here, we first revisit the notion of directed GWSS through the eigendecomposition of the GSO. Next, we propose a graph phase randomization method to generate surrogate graph signals for non-parametric testing. We specifically assess the covariance, since second-order statistics are commonly used to infer interactions in networked systems \cite{bullmore2009complex}, yet may be confounded by the underlying graph structure. By testing covariance against structure-preserving surrogates, we isolate dependencies that cannot be explained solely by structural constraints. We compare our framework with undirected graph and permutation surrogates and evaluate detection performance on synthetic graphs with varying asymmetry levels and on a real social network. 

\section{Notations}
Lower bold cases denote vectors where $\vc x[n]$ indicates the $n$-th element of $\vc x$. Upper case bold letters are used for matrices where $\ma U[n,m]$ denotes the $(n,m)$-th entry of $\ma U$. The notations $(.)^*$, $(.)^T$, $(.)^H$, $(.)^{-1}$, $(.)^{-H}$ denote conjugate, transpose, hermitian, inverse, and hermitian inverse, respectively. We use $j$ to refer to the complex number such that $j^2=-1$. With $||\cdot||_2$, we refer to the $\ell_2$-norm.

\section{Preliminaries}
\label{sec:preliminaries}
\subsection{Undirected Graph}
Given an undirected graph with $N$ nodes represented by the symmetric adjacency matrix $\bar{\ma A}$ that we consider as the GSO, we also have the eigendecomposition
\begin{align*}
  \bar{\ma A}=\bar{\ma U}\bar{\ma \Lambda}\bar{\ma U}^T,  
\end{align*}
where $\bar{\ma U}$ are the real-valued eigenvectors with associated eigenvalues $\bar{\ma \Lambda}[n,n]$ on the diagonal of $\bar{\ma \Lambda}$, serves to define the GFT of a graph signal $\vc x$ which is defined as $\hat{\vc x}={\bar{\ma U}}^T {\vc x}$. In spectral domain the filtering \cite{shuman_emerging_2013} operation can be defined via the product of a diagonal matrix $\bar{\ma G}$  and the Fourier coefficients $\hat{\vc x}$ resulting in $\vc y =\bar{\ma U} \bar{\ma G}\hat{\vc x}=\bar{\ma U} \bar{\ma G}\bar{\ma U}^T\vc x$.

\subsection{Directed Graph}
For a directed graph with $N$ nodes, characterized by the (non-symmetric) adjacency matrix $\ma A$, the eigendecomposition yields:

\begin{align*}
  \ma A=\ma U\ma \Lambda \ma U^{-1},
\end{align*}
where the eigenvalues $\ma \Lambda[n,n]$ are either real-valued or complex conjugate pairs, with associated real-valued or complex conjugate eigenvectors, respectively. The GFT of a real-valued graph signal $\vc x$ is defined as $\hat{\vc x}=\ma U^{-1}{\vc x}$. For directed graphs, the filtering \cite{sandryhaila_discrete_2014} in spectral domain can similarly be defined through the product of a diagonal matrix $\ma G$  and the Fourier coefficients $\hat{\vc x}$ resulting in $\vc y =\ma U \ma G\hat{\vc x}=\ma U \ma G\ma U^{-1}\vc x$.

\section{Graph Stationarity}
\label{sec:stationarity}
In this section, we revisit the definition of GWSS as proposed in \cite{perraudin_stationary_2017, girault_translation_2015, iraji_wss_2025} through the eigendecomposition of the directed adjacency.

\subsection{Undirected Graph}
Let $\vc x \in \mathbb{R}^N$ be a random graph signal. Consider the first- and second-order statistical moments of $\vc x$ as the mean $\vct \mu_{\vc x}[n]=\mathbb{E}[\vc x[n]]$ and the covariance matrix $\ma H_{\vc x}=\mathbb{E}[({\vc x} - \vct \mu_{\vc x})({\vc x} - \vct \mu_{\vc x})^H]$. Without loss of generality, we assume throughout the paper that $\vct \mu_{\vc x}=\vc 0$. The definition of GWSS provided by \cite{perraudin_stationary_2017} yields the following equivalent characterization.

\begin{definition}[GWSS] \label{ustatio}
    A random graph signal $\vc x$ is GWSS if and only if its there exists a diagonal matrix $\hat{\bar{\ma H}}_{\vc x}\in\mathbb{R}^{N\times N}$ such that its covariance matrix can be expressed into:
\begin{align*} 
    \ma H_{\vc x}= \bar{\ma U}\hat{\bar{\ma H}}_{\vc x}\bar{\ma U}^T.
\end{align*}
In particular, the graph power spectral density (PSD) of $\vc x$ is defined to be $\hat{\bar{\vc h}}_{\vc x} \in\mathbb{R}^N$ with $\hat{\bar{\vc h}}_{\vc x}[n] = \hat{\bar{\ma H}}_{\vc x}[n, n]$ diagonal entries of $\hat{\bar{\ma H}}_{\vc x}$. 
\end{definition}

\subsection{Directed Graph}
We revisit GWSS processes on directed graphs, initially defined through the Jordan Decomposition~\cite{iraji_wss_2025}, using the eigendecomposition. In the case where an eigendecomposition of $\ma A$ exists, we instead have proper eigenvectors and a simpler formulation of the PSD. This brings us to the following definition of DGWSS.
\begin{definition}[DGWSS] \label{dstatio}
    A random graph signal $\vc x$ is Directed Graph Wide Sense Stationary (DGWSS) if and only if there exists a diagonal matrix $\hat{\ma H}_{\vc x}\in\mathbb{C}^{N\times N}$ such that its covariance matrix can be expressed as:
    \begin{align*}
    \ma H_{\vc x}= \ma U\hat{\ma H}_{\vc x}\ma U^H.
\end{align*}
In particular, the graph PSD of $\vc x$ is defined to be $\hat{\vc h}_{\vc x}\in\mathbb{C}^N$ with $\hat{\vc h}_{\vc x}[n] = \hat{\ma H}_{\vc x}[n, n]$ diagonal entries of $\hat{\ma H}_{\vc x}$.
\end{definition}

\begin{figure*}
    \centering
    \captionsetup[subfigure]{justification=centering}
    \subfloat[Synthetic graph\label{synthetic-graph}]{%
    \includegraphics[width=0.205\linewidth]{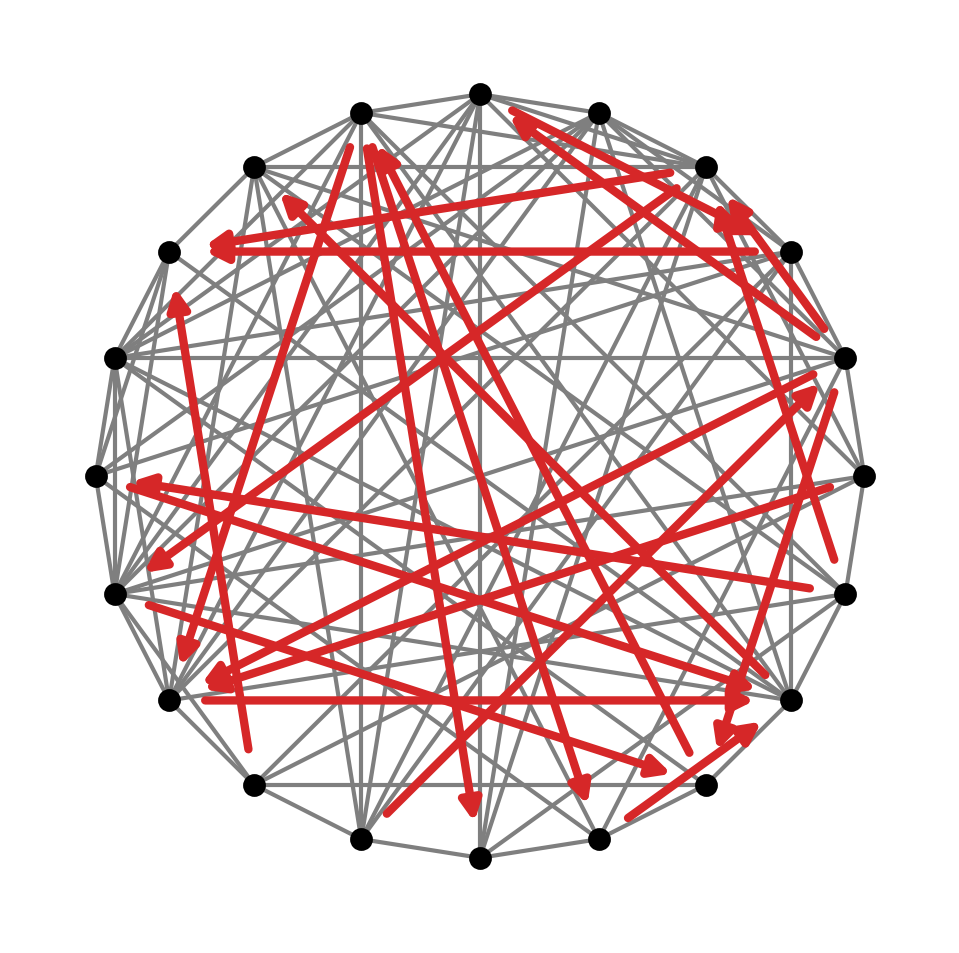}}
    \subfloat[True Irregularity \label{ground-truth}]{%
    \includegraphics[width=0.195\linewidth]{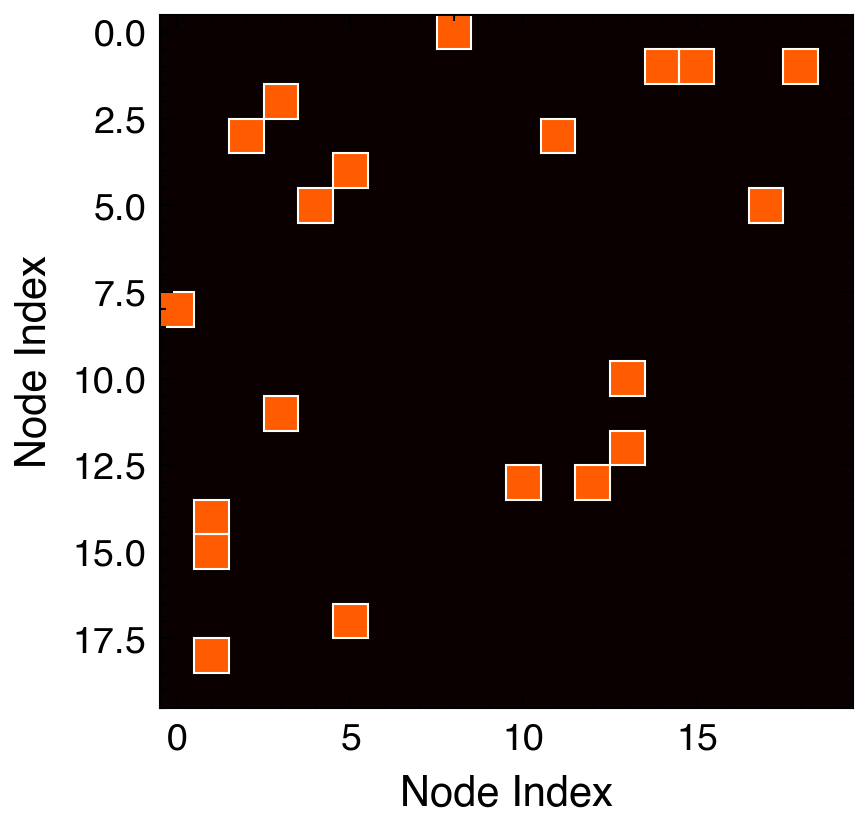}}
    \subfloat[ROC curves\label{roc-curve-synthetic}]{%
    \includegraphics[width=0.30\linewidth]{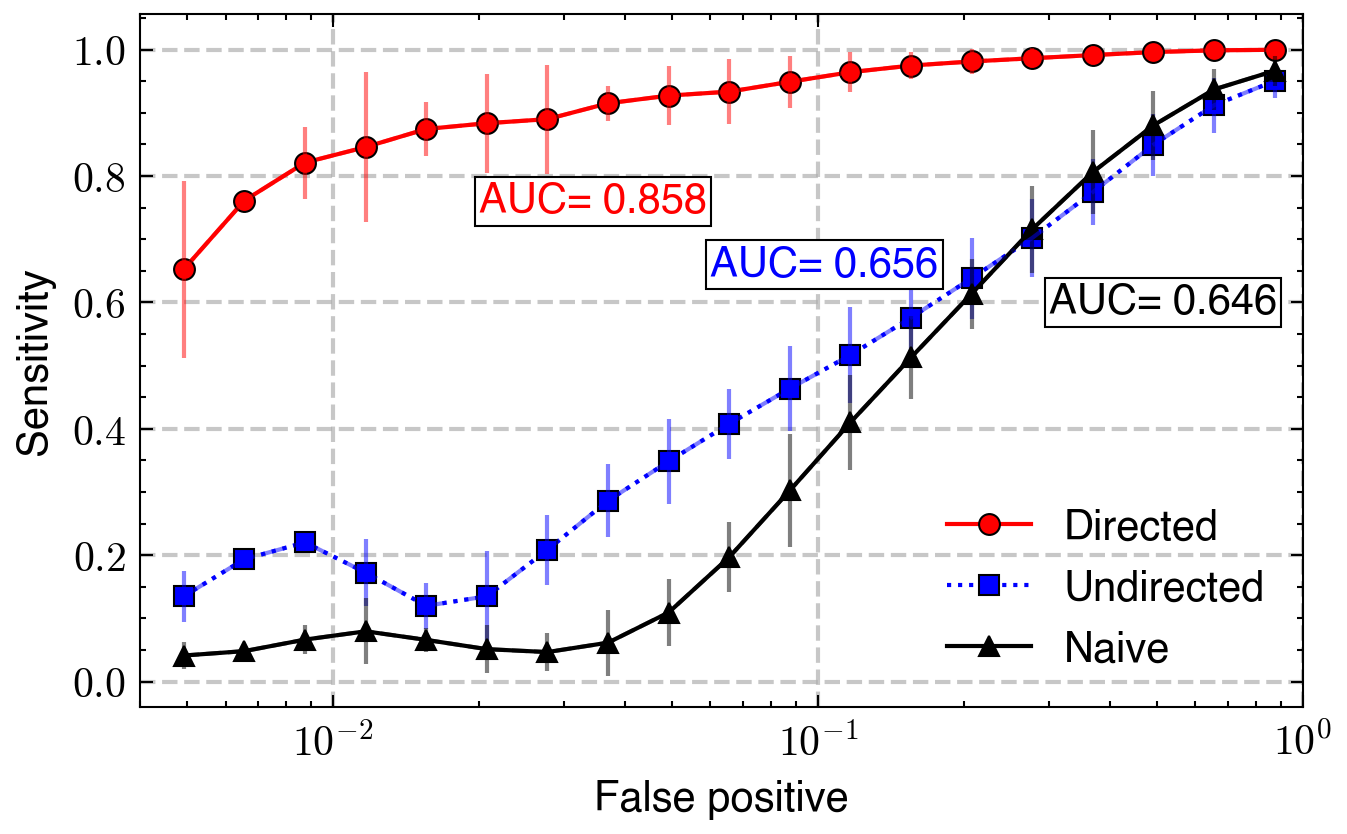}}
    \subfloat[AUC vs Asymmetry level\label{assymetrylevel}]{%
    \includegraphics[width=0.30\linewidth]{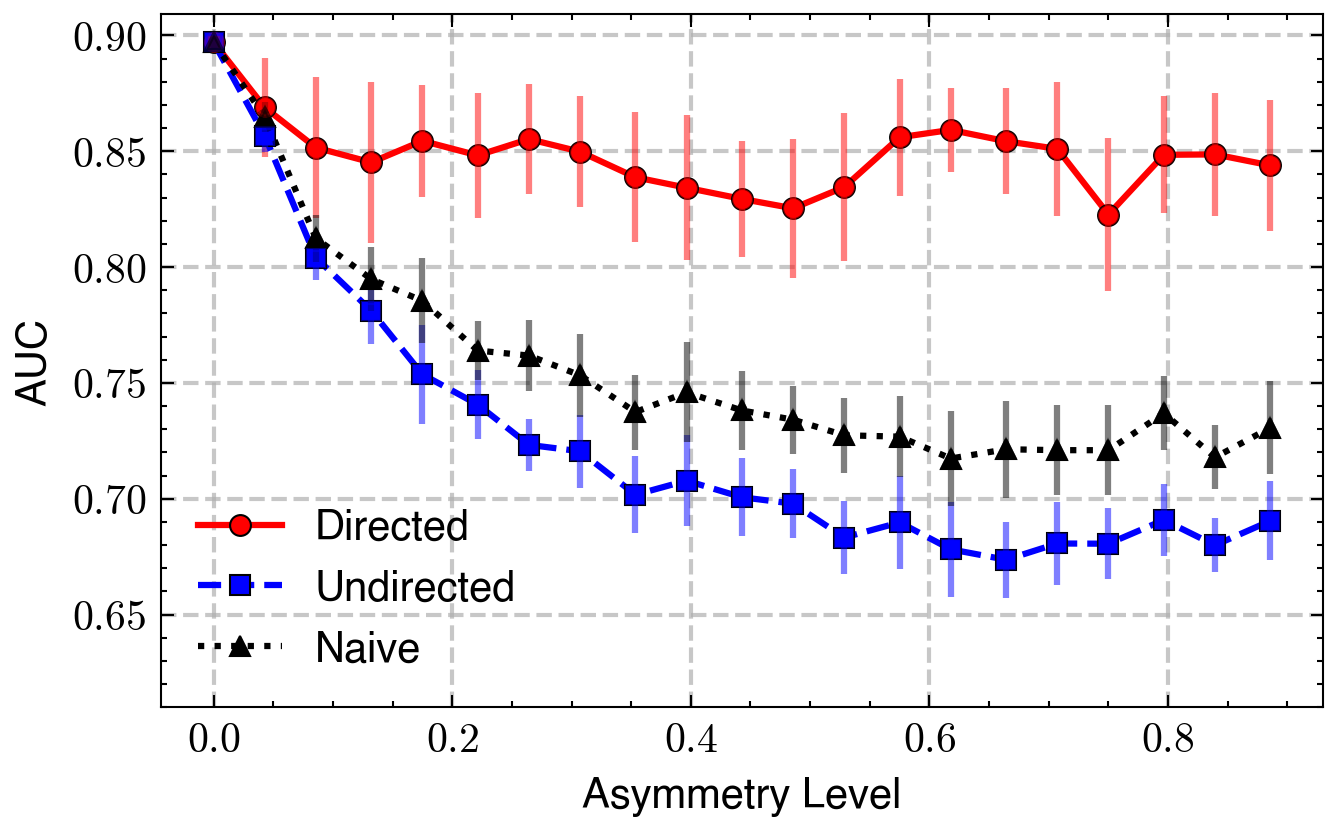}}
  \caption{a)~Example of synthetic graph with edge density $\rho=0.4$ and asymmetry level $\alpha=0.3$. Red edges are directed, while gray ones are undirected. b)~Example of ground truth to detect; i.e., non-zero entries of a random covariance matrix. c)~ROC curves averaged over $50$ covariance matrices considering graph in a). d) AUC scores averaged over $20$ graphs for different asymmetry levels.}
\end{figure*}

\section{Surrogates Generation Scheme}
\label{sec:surrogates}
We first enunciate the null hypothesis under which our surrogate graph signals are generated. Then, we briefly recall the randomization scheme on undirected graphs before generalizing the approach for directed graphs. Finally, we prove that the obtained surrogates are stationary signals.

\subsection{Null Hypothesis}
The statistical testing scheme used here is based on the acceptance/rejection of the null hypothesis $\mathcal{H}_0$: \emph{the empirical graph signals $\vc m_k$, $k=1,\ldots,K$, originate from a (D)GWSS process with respect to a given graph}.

In the following section, we will leverage the assumption under the null hypothesis to generate surrogate signals. 

\subsection{Surrogates on Undirected Graph}
Following our previous method~\cite{pirondini_spectral_2016}, we first define the randomization filter $\bar{\ma R}$ as the diagonal random-sign matrix
\begin{align*}
    \bar{\ma R}[n,n] = \epsilon_n,
\end{align*}
where $\epsilon_n$ is defined as $\epsilon_n\in\{-1,1\}$, and $Pr(\epsilon_n=1)=Pr(\epsilon_n=-1)=0.5$. Under the null, an empirical signal $\vc m$ is assumed to be a realization of a GWSS random signal on a given graph. Then, surrogates generated as \label{sign-randomize}
\begin{align*}
    \vc m^\text{surr}= \bar{\ma U} \bar{\ma R}\bar{\ma U}^T\vc m,
\end{align*}
preserve stationarity. 
\begin{prop}
    \label{prop-undirected-statio}
    Let $\vc m\in\mathbb{R}^N$ be a realization of a GWSS random graph signal w.r.t.\ a given graph, then randomizing the signs of the GFT coefficients preserves stationarity. 
\end{prop}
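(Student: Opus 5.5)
The plan is to compute the covariance of the surrogate directly and show it is still diagonalized by $\bar{\ma U}$, which by Definition~\ref{ustatio} is exactly GWSS. I will treat $\vc m$ as a realization of the zero-mean GWSS signal $\vc x$ with $\ma H_{\vc x}=\bar{\ma U}\hat{\bar{\ma H}}_{\vc x}\bar{\ma U}^T$. I will take the random signs $\epsilon_n$ to be i.i.d.\ and independent of $\vc x$, and write $\vc x^\text{surr}=\bar{\ma U}\bar{\ma R}\bar{\ma U}^T\vc x$. The mean is immediate: $\mathbb{E}[\vc x^\text{surr}]=\bar{\ma U}\,\mathbb{E}[\bar{\ma R}]\,\bar{\ma U}^T\mathbb{E}[\vc x]=\vc 0$.

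For the second moment, I will write
\begin{align*}
\ma H_{\vc x^\text{surr}}=\mathbb{E}\big[\bar{\ma U}\bar{\ma R}\bar{\ma U}^T\vc x\vc x^T\bar{\ma U}\bar{\ma R}\bar{\ma U}^T\big].
\end{align*}
Conditioning on $\bar{\ma R}$ and using independence gives $\bar{\ma U}\,\mathbb{E}_{\bar{\ma R}}\big[\bar{\ma R}\,\hat{\bar{\ma H}}_{\vc x}\,\bar{\ma R}\big]\bar{\ma U}^T$, since $\bar{\ma U}^T\ma H_{\vc x}\bar{\ma U}=\hat{\bar{\ma H}}_{\vc x}$ by orthogonality of $\bar{\ma U}$. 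Because $\bar{\ma R}$ and $\hat{\bar{\ma H}}_{\vc x}$ are both diagonal, they commute, and $\bar{\ma R}^2=\ma I$ since $\epsilon_n^2=1$. Hence $\bar{\ma R}\hat{\bar{\ma H}}_{\vc x}\bar{\ma R}=\hat{\bar{\ma H}}_{\vc x}$ deterministically, and $\ma H_{\vc x^\text{surr}}=\ma H_{\vc x}$. So the surrogate is GWSS with the same PSD $\hat{\bar{\vc h}}_{\vc x}$.

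The computation itself is short. The only step needing care is the modelling one: making precise that ``a realization'' is the random signal $\vc x$ and that the sign flips are independent of it. Without that independence the conditional expectation step fails.

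A remark worth adding: even if $\hat{\bar{\ma H}}_{\vc x}$ were not diagonal, the i.i.d.\ zero-mean signs would kill its off-diagonal terms, since $\mathbb{E}[\epsilon_n\epsilon_m]=\delta_{nm}$. The surrogate covariance would then be $\bar{\ma U}\,\mathrm{diag}(\bar{\ma U}^T\ma H_{\vc x}\bar{\ma U})\,\bar{\ma U}^T$. This shows that sign randomization produces a GWSS process in general, and that under $\mathcal{H}_0$ it leaves the covariance unchanged.
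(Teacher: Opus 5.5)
Your proof is correct and follows essentially the paper's route: the paper derives this case from its directed phase-randomization proof by computing the surrogate covariance in the spectral domain, with the randomization matrix independent of the signal. The only nuance is that your main line uses the diagonality of $\hat{\bar{\ma H}}_{\vc x}$ together with $\bar{\ma R}^2=\ma I$, so the covariance is unchanged for every sign draw. The paper instead uses $\mathbb{E}[\epsilon_m\epsilon_n]=\delta_{mn}$ to obtain $\mathrm{diag}(\hat{\ma H}_{\vc x})$, which is exactly the argument in your closing remark.
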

The proof of Proposition~\ref{prop-undirected-statio} is a particular case of the proof of Proposition~\ref{prop-directed-statio}.

\subsection{Surrogates on Directed Graphs}
On directed graphs, the GFT coefficients of graph signals become complex-valued, as well as the associated GFT frequencies (eigenvalues). Therefore, we leverage the presence of the phase to introduce a new randomization filter $\ma R$:
\begin{align*}
    \ma R[n,n] = 
    \left\{ \begin{array}{ll} 
    e^{j\Psi_n}, & \Im(\ma \Lambda[n,n]) \neq 0\\
    \epsilon_n, & \Im(\ma \Lambda[n,n])=0 
    \end{array} \right.
\end{align*}
with $\epsilon_n$ as in Sect.~\ref{sign-randomize} and 
$\Psi_n \overset{\text{i.i.d}}{\sim} \mathcal{U}([-\pi, \pi])$. If a real-valued graph signal is desired, then for GFT coefficients associated with conjugate pairs $n, m$, we set $\Psi_n=-\Psi_m$. 

In general, Parseval identity does not hold for the GFT of a directed graph, thus energy after phase randomization may not be preserved. However, since the observed signal $\mathbf{m}$ is DGWSS under the null, the signal energy remains the same after phase randomization as covariance is preserved and so is the trace of the covariance i.e the signal energy. Then, the surrogates are generated as
\begin{align*}
    \vc m^\text{surr} = \ma U \ma R \ma U^{-1}\vc m
\end{align*}
and preserve stationarity.
\begin{prop}
    \label{prop-directed-statio}
    Let $\vc m\in\mathbb{R}^N $ be a realization of a DGWSS random graph signal w.r.t.\ a given graph, then phase-randomizing the GFT coefficients preserves stationarity.
\end{prop}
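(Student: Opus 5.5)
We treat the surrogate $\vc m^\text{surr}=\ma U\ma R\ma U^{-1}\vc m$ as a random signal. Its randomness comes from two sources: the DGWSS process generating $\vc m$, and the randomization filter $\ma R$, which is drawn independently of $\vc m$. The plan is to compute its covariance directly and show that it has the form required by Definition~\ref{dstatio} with the \emph{same} PSD $\hat{\ma H}_{\vc x}$. The mean is immediate: $\mathbb{E}[\vc m^\text{surr}]=\ma U\,\mathbb{E}[\ma R]\,\ma U^{-1}\mathbb{E}[\vc m]=\vc 0$.

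For the second moment, note that $\ma U^{-1}\vc m$ has covariance $\ma U^{-1}\ma H_{\vc x}\ma U^{-H}=\hat{\ma H}_{\vc x}$, which follows from Definition~\ref{dstatio}. Conditioning on $\ma R$ then gives
\begin{align*}
\mathbb{E}[\vc m^\text{surr}(\vc m^\text{surr})^H]
=\ma U\,\mathbb{E}_{\ma R}\!\left[\ma R\hat{\ma H}_{\vc x}\ma R^H\right]\ma U^H .
\end{align*}
Because $\hat{\ma H}_{\vc x}$ and $\ma R$ are both diagonal, we have $\ma R\hat{\ma H}_{\vc x}\ma R^H=\hat{\ma H}_{\vc x}\ma R\ma R^H$. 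Every diagonal entry of $\ma R$ is unimodular: $|e^{j\Psi_n}|=1$ and $\epsilon_n^2=1$. Hence $\ma R\ma R^H=\ma I$ holds deterministically, and the covariance is exactly $\ma U\hat{\ma H}_{\vc x}\ma U^H=\ma H_{\vc x}$. So the surrogate is DGWSS with an unchanged PSD. This holds even under the conjugate-pair constraint $\Psi_n=-\Psi_m$, since only $|\ma R[n,n]|=1$ is used. Proposition~\ref{prop-undirected-statio} follows as the special case $\ma U=\bar{\ma U}$ real orthogonal and $\ma R=\bar{\ma R}$.

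The main obstacle is the interpretation of the statement when only a single realization $\vc m$ is given. One could instead fix $\vc m$ and take the expectation over $\ma R$ alone. In that case the cross terms $\mathbb{E}[\ma R[n,n]\ma R[m,m]^*]$ vanish for $n\neq m$ outside conjugate pairs, leaving $\ma U\,\mathrm{diag}(|\hat{\vc m}[n]|^2)\,\ma U^H$. That is DGWSS with empirical PSD $|\hat{\vc m}|^2$, an unbiased estimate of $\hat{\vc h}_{\vc x}$.

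The conjugate-pair constraint, imposed for real-valuedness, does create residual correlation between paired indices $n,m$, where $\mathbb{E}[\ma R[n,n]\ma R[m,m]^*]=\mathbb{E}[e^{2j\Psi_n}]=0$. Checking that this cross term still vanishes is the step to verify carefully. The same check is needed for the real-eigenvalue sign flips, where independence of the $\epsilon_n$ gives zero cross terms. I would present the joint-expectation argument as the proof and mention the conditional version as a remark.
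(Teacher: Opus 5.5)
Your proof is correct, but your main argument takes a different route from the paper's. The paper writes the surrogate covariance as $\ma U\,\mathbb{E}[\ma R\,\ma U^{-1}\vc x\vc x^H\ma U^{-H}\ma R^H]\,\ma U^H$. It removes the off-diagonal spectral entries with the decorrelation identity $\mathbb{E}[\ma R[m,m]\ma R^*[n,n]]=\delta_{mn}$, ending with $\ma U\,\mathrm{diag}(\cdot)\,\ma U^H$, and never uses that the input spectral covariance is diagonal. You instead use the DGWSS hypothesis to make $\hat{\ma H}_{\vc x}$ diagonal, and then need only $|\ma R[n,n]|=1$, so that $\ma R\hat{\ma H}_{\vc x}\ma R^H=\hat{\ma H}_{\vc x}$ for every draw of $\ma R$.

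Your route is more elementary and unaffected by the conjugate-pair constraint. It also makes explicit that the covariance itself, not merely its DGWSS form, is unchanged; this is the fact the paper invokes for energy preservation just before the proposition. The price is that it is insensitive to the randomness of $\ma R$: it would go through verbatim with $\ma R=\ma I$, so it does not show what the randomization achieves. The paper's decorrelation argument is what makes the surrogate stationary conditional on the single observed $\vc m$, with expectation over $\ma R$ only. It would even stationarize a non-DGWSS zero-mean input. That is the operationally relevant statement for generating surrogates from fixed data.

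Your \emph{remark} yielding $\ma U\,\mathrm{diag}(|\hat{\vc m}[n]|^2)\,\ma U^H$ is precisely the paper's argument, stated more carefully than the paper does. The paper's underbrace labels the realization-level outer product $\ma U^{-1}\vc x\vc x^H\ma U^{-H}$ as $\hat{\ma H}_{\vc x}$, which blurs the two readings. The conjugate-pair cross term you flag as needing care is exactly the $\delta_{mn}$ identity, and your computation $\mathbb{E}[e^{2j\Psi_n}]=0$ settles it. So both readings of the statement are covered by what you wrote.
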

\begin{proof}
We set out to show that phase-randomizing $\vc m$ yields again a realization of a DGWSS graph signal. Let $\vc x$ and $\vc s$ be the random variables associated to the realization $\vc m$ and its phase-randomized counterpart $\vc m^\text{surr}$, respectively. First note that $\mathbb{E}[\vc s]=\ma U\mathbb{E}[\ma R]\ma U^{-1}\mathbb{E}[\vc x]=\vc 0$ as $\ma R$ is defined independently from $\vc x$.
Then, expanding the covariance of $\vc s$ we have
\begin{align*}
\ma H_{\vc s} = \mathbb{E}[\ma U \ma R \underbrace{\ma U^{-1} {\vc x\vc x}^H\ma U^{-H}}_{\hat{\ma H}_\vc x} \ma R^H\ma U^H].
\end{align*}
With $\hat{\ma H}=\mathbb{E}[\ma R\hat{\ma H}_\vc x\ma R^H]$ and its entries
\begin{align*}
   \hat{\ma H}[m,n] = \mathbb{E}\big[\ma R[m,m]{\bf R}^*[n,n]\big] \hat{\ma H}_\vc x[m,n],
\end{align*}
one can observe in particular that $\mathbb{E}\big[\ma R[m,m]\ma R^*[n,n]\big]=\delta_{mn}$ where $\delta_{mn}$ denotes the Kronecker delta. 
Finally we have
\begin{align*}
    \ma H_\vc s=\ma U \hat{\ma H} \ma U^H,
\end{align*}
with $\hat{\ma H}=\text{diag}(\hat{\ma H}_\vc x)$ diagonal matrix. Consequently, $\vc s$ is DGWSS, and phase-randomizing $\vc m$ indeed yields a realization of a DGWSS graph signal, which concludes the proof.
\end{proof}

\subsection{Testing with Surrogates}
We want to assess whether the observed covariance satisfies the null hypothesis. From the empirical data $\{\mathbf{m}_k\}_{k=1}^K$, we can build the covariance matrix as
\begin{align*}
\ma H = \frac{1}{K} \sum_{k=1}^K \vc m_k \, \vc m_k^T.    
\end{align*}
To test the elements $\ma H[n,m]$, we generate $T$ surrogate sets 
$\{{\vc m}_{k}^{\text{surr},(t)},k=1,\ldots,K\}_{t=1}^T$. For each surrogate set, the covariance is computed, from which the null distribution of each element can be derived. In particular, all elements are ranked as $\ma H^{\text{surr},(t)}$, $t=1,\ldots,T$, and compared against the empirical covariance $\ma H[n,m]$: the $p$-value under the null corresponds to $t/(T+1)$ when its value falls in the $t$-th bin. For detection, multiple comparison correction can be performed using the Bonferroni method or recent multiple hypothesis testing approach leveraging graph structure \cite{jian2025graph}.

\section{Experimental Results}
\label{sec:experiments}
To validate the proposed scheme, we consider empirical signal as follows: 
\begin{align*}
    \vc m_k[n] = \vct \epsilon_k[n] + \vct \eta_k[n]\ , \ \ k=1,...,K,
\end{align*}
with $\vct \epsilon_k$ realizations of a DGWSS on a given graph, and of mean $\vc 0$; $\vct \eta$ a random signal of mean $\vc 0$ and covariance $\ma H_{\vct \eta}$. The observations $\vc m_k$ are sampled from a multivariate gaussian. These realizations violate the null due to the contribution of $\vct \eta_k$. The goal is to detect those deviations. Practically, the matrix $\ma H_{\vct \eta}$ causes the irregular covariance of $\ma H$ and we focus on the cross covariance of $\ma H$, thus in subsequent experiments non-zero off-diagonal entries of $\ma H_{\vct \eta}$ will act as ground truth for detection. We take $\ma H_{\vct \epsilon}$ as the covariance of a directed graph white noise, resulting in graph-dependant covariances~\cite{iraji_wss_2025}. Throughout, entries of the covariance $\ma H_{\vct \eta}$ are randomly set to $0$ or $1$ and subsequently rescaled to achieve a variance of $0.5$ corresponding to approximatively half the variance of $\vct \epsilon$ across different graphs, all while ensuring positive definiteness. All code and data used in this article are openly available\footnote{https://github.com/MIPLabCH/Graph-Test-Covariance}.

\begin{table}
\centering
\caption{AUC Scores for covariance detection on synthetic graph with density $\rho=0.5$ and asymmetry level $\alpha=0.35$.}
\begin{tabular}{lcc}
\hline
\textbf{Model} & \textbf{Synthetic} & \textbf{Real} \\ & ($\rho=0.5$, $\alpha=0.35$) \\
\hline
Directed      & \textbf{0.84 $\pm$ 0.03} & \textbf{0.82} \\
Undirected    & 0.77 $\pm$ 0.02 & 0.75 \\
Permutation   & 0.76 $\pm$ 0.02 & 0.81 \\
\hline
\end{tabular}
\label{tab:auc_scores}
\end{table}

\subsection{Synthetic Graph: Generation}
\label{sec:graph-generation}
To generate synthetic graphs, we proceed in a similar fashion to \cite{flaxman2007expansion}, which is to start from a basic graph: in our case the undirected cyclic graph. We then introduce directed and undirected edges between random uniformly selected pairs of nodes until a given density of edges $\rho$ is reached. To vary the asymmetry of the graph, we control the proportion of directed edges through the probability of the added edge being directed, otherwise the added edge is undirected. To evaluate the asymmetry level $\alpha$ of a graph we compute the ratio of the number of directed edges to the total number of edges in the graph. We see an example of such a generated graph in Fig.~\ref{synthetic-graph} for density $\rho=0.4$ and asymmetry level $\alpha=0.3$.

\begin{figure}
    \centering
    \captionsetup[subfigure]{justification=centering}
    \subfloat[Real graph\label{real-graph}]{%
    \includegraphics[width=0.41\linewidth]{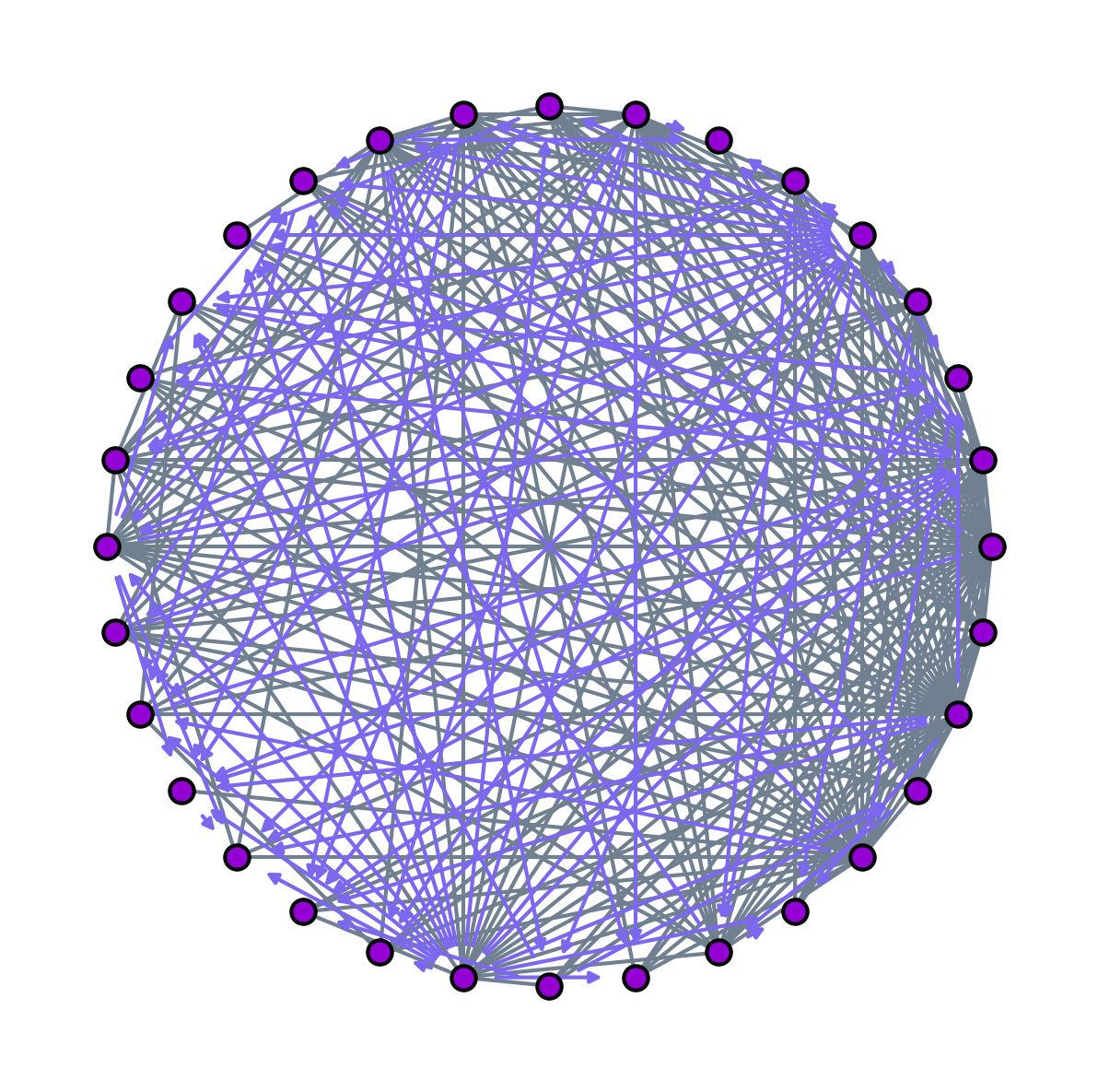}}
    \subfloat[ROC curves\label{real-roc}]{%
    \includegraphics[width=0.59\linewidth]{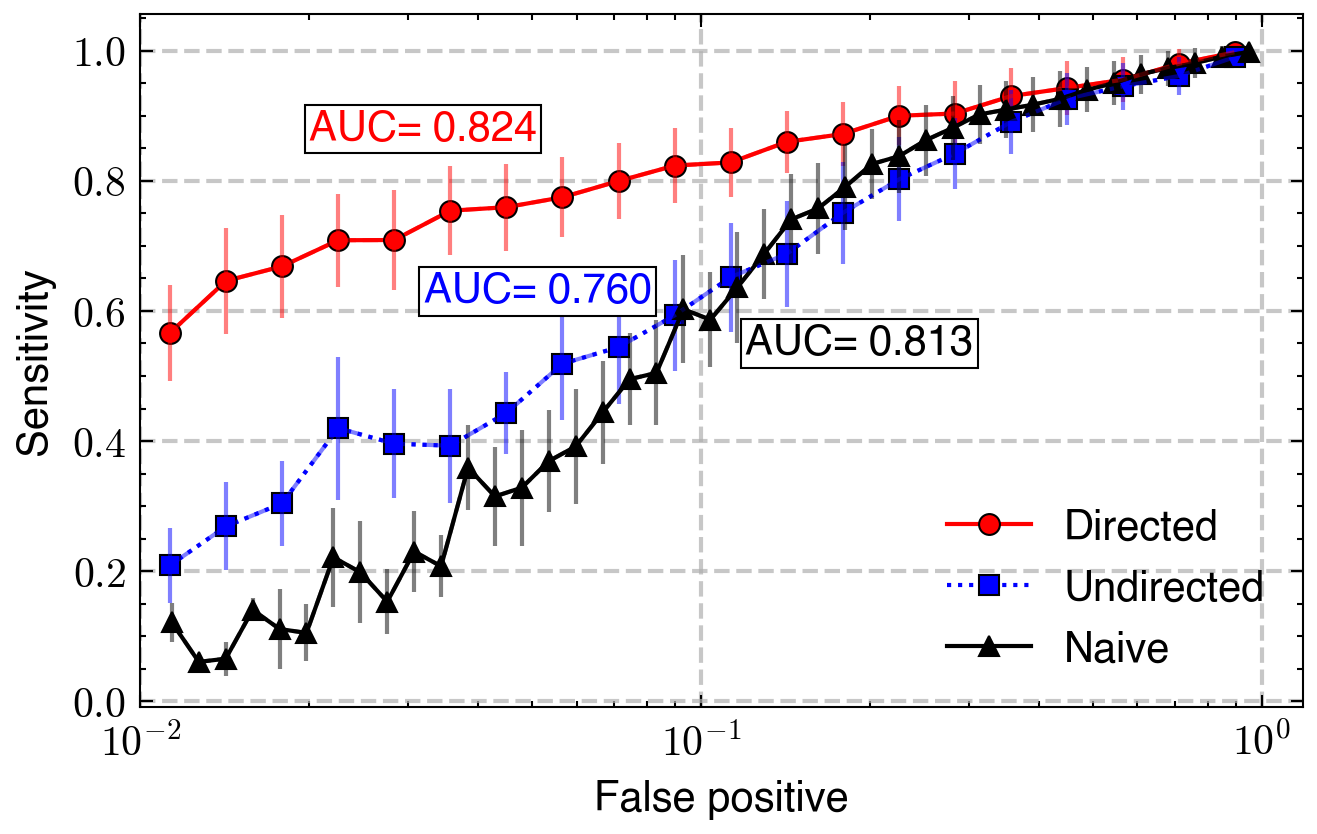}}
  \caption{a) Real graph extracted from Freeman EIES dataset. Purple edges are directed while gray ones are undirected. b) ROC curves over $50$ covariance matrices.}
\end{figure}

\subsection{Synthetic Graph: Baseline Performance} 
\label{base-case}
We employ a synthetic graph of $N=20$ nodes, generated as detailed in Sect.~\ref{sec:graph-generation}. We also show an example of ground truth irregularity to be detected in Fig.~\ref{ground-truth}. To perform testing, we generate $999$ surrogates and compute detection sensitivity and false positives at various significance thresholds to obtain a ROC curve. Repeating this procedure for $50$ random covariances of $\vct \eta$, we obtain multiple ROC curves and an AUC score on the mean ROC curve. The results are presented in Fig.~\ref{roc-curve-synthetic} and are compared to permutation-based surrogates, obtained by randomly shuffling the entries of the observed graph signals, as well as undirected graphs surrogates, where the graph is defined using the symmetrized adjacency $\frac{A+A^T}{2}$.

\subsection{Synthetic Graph: Varying Asymmetry level}
\label{sec:varying-asymmetry}
We take $\rho=0.5$ (chosen to be consistent with the real graph in Sect.~\ref{sec:real-graph}) and vary the graphs' asymmetry level $\alpha$ from $0$ to $0.9$, as the base graph is undirected, for which we each generate $50$ graphs. For each of these graphs, we proceed with the same setting as in Sect.~\ref{base-case} to obtain ROC curves (Fig.~\ref{roc-curve-synthetic}) and summarizing AUC scores. Finally we can observe in Fig.~\ref{assymetrylevel} the impact of varying asymmetry levels on the detection performance. Detection performance remains constant in directed models, while in undirected and permutation-based models, the performance drops as asymmetry level increases.

\subsection{Real-World Graph}
\label{sec:real-graph}
We use the Freeman EIES dataset \cite{freeman1979networkers}, recording the number of messages sent among $N=32$ researchers. To construct the graph, we set the adjacency matrix entry $(m,n)$ to be whether the $m$-th researcher (node) often sends messages to the $n$-th researcher, that is, when the number of messages sent between the pair is greater than the median of all numbers of messages. As the number of messages from $m$ to $n$ may differ from those from $n$ to $m$, the binary matrix representation is asymmetric. Fig.~\ref{real-graph} shows the graph structure displaying an edge density of $\rho=0.5$ and an asymmetry level of $\alpha=0.35$. We then proceed with the same testings as in Sect.~\ref{base-case} to obtain ROC curves (Fig.~\ref{real-roc}) and an AUC score. We observe that directed graph surrogates again outperform undirected and permutation-based ones. Moreover, the AUC scores obtained are consistent with synthetic example of similar edge density and asymmetry level (Table~\ref{tab:auc_scores}).




\section{Conclusion}
\label{sec:conclusion}
We presented a novel approach for non-parametric statistical testing of signals on directed graphs. Our experiments demonstrated the feasibility of our approach, highlighting the importance of accounting for graph directionality, and provided a simple way to gauge detection performance in real graphs of arbitrary asymmetry. Future work will focus on extending surrogate-based testing beyond irregular covariance detection and exploring practical applications.

\section{Acknowledgement}
This work was supported by the Swiss National Science Foundation Sinergia 209470 ``Precision mapping of electrical brain network dynamics with application to epilepsy''. 


\bibliographystyle{ieeetr}
\bibliography{strings,refs}

\end{document}